\newtheorem{algorithm}{Algorithm}
\newtheorem{observation}{Observation}
\renewcommand{\paragraph}[1]{\noindent\textbf{\emph{#1}}}
\def\longrightharpoonup{\relbar\joinrel\rightharpoonup}
\def\longleftharpoondown{\leftharpoondown\joinrel\relbar}
\def\longrightleftharpoons{\mathop{\vcenter{\hbox{\ooalign{\raise1pt\hbox{$\longrightharpoonup\joinrel$}\crcr\lower1pt\hbox{$\longleftharpoondown\joinrel$}}}}}}
\def\rxn{\mathop{\rightarrow}\limits}  
\newcommand{\bigo}{\mathcal{O}}
\title{Robust Detection in Leak-Prone Population Protocols}
\author{
Dan Alistarh\inst{1}\inst{5}\thanks{Supported by an SNF Ambizione Fellowship.} 
\and 
Bart\l{}omiej Dudek\inst{2}
\and 
Adrian Kosowski\inst{3}\thanks{Supported by Inria project GANG, ANR project
DESCARTES, and NCN grant 2015/17/B/ST6/01897.}
\and 
David Soloveichik\inst{4}\thanks{Supported by NSF grants CCF-1618895 and CCF-1652824.}
\and 
Przemys\l{}aw~Uzna\'nski\inst{1}
}
\institute{ETH Z\"urich, Switzerland \and
University of Wroc\l{}aw, Poland \and Inria Paris and IRIF, Universit\'e Paris Diderot, France \and University of Texas, Austin, TX, USA \and IST Austria\\\ \\E-mail addresses: \texttt{\{dan.alistarh,przemyslaw.uznanski\}@inf.ethz.ch},\texttt{bartlomiej.dudek@cs.uni.wroc.pl},\\\texttt{adrian.kosowski@inria.fr}, \texttt{david.soloveichik@utexas.edu}.}
\begin{document}

\maketitle

\setcounter{footnote}{0}
\begin{abstract}
In contrast to electronic computation,
chemical computation is noisy and susceptible to a variety of sources of error,
which has prevented the construction of robust complex systems.
To be effective, chemical algorithms must be designed with an appropriate error model in mind.
Here we consider the model of chemical reaction networks that preserve molecular count (population protocols),
and ask whether computation can be made robust to a natural model of unintended ``leak'' reactions.
Our definition of leak is motivated by both the particular spurious behavior seen when implementing chemical reaction networks with DNA strand displacement cascades,
as well as the unavoidable side reactions in any implementation due to the basic laws of chemistry.
We develop a new ``Robust Detection'' algorithm for the problem of fast (logarithmic time) single molecule detection,
and prove that it is robust to this general model of leaks.
Besides potential applications in single molecule detection, the error-correction ideas developed here might enable a new class of robust-by-design chemical algorithms. Our analysis is based on a non-standard hybrid argument, combining ideas from discrete analysis of population protocols with classic Markov chain techniques.

\end{abstract}

\section{Introduction}
A major challenge in designing autonomous molecular systems is to achieve a sufficient degree of error tolerance despite the error-prone nature of the chemical substrate.
While considerable effort has focused on making the chemistry itself more robust, here we look at the possibility of developing chemical algorithms that are inherently resilient to the types of error encountered.
Before designing robust chemical algorithms, we must decide on a good error model that is relevant to the systems we care about.
In this paper we focus on a very simple and general error model that is motivated both by basic laws of chemistry as well as by implementation artifacts in strand displacement constructions for chemical reaction networks. We begin by listing the types of errors we aim to capture. 

\vspace{5pt}

\paragraph{Leaks due to Law of Catalysis.}
A fundamental law of chemical kinetics is that for every catalyzed reaction, there is an uncatalyzed one that occurs at a (often much) slower rate.
By a \emph{catalytic} reaction, we mean a reaction that involves some species $X$ but does not change its amount; this species is called a \emph{catalyst} of that reaction.
For example, the reaction $X + Y \rxn X + Z$ is catalytic, and species $X$ is the catalyst since its count remains unchanged by the firing of this reaction.
By the law of catalysis, reaction $X + Y \rxn X + Z$ must be accompanied by a (slower) \emph{leak} reaction $Y\rxn Z$.
(A more general formulation of the law of catalysis is that if any sequence of reactions does not change the net count of $X$, then there is a pathway that has the same effect on all the other species, but can occur in the absence of $X$ (possibly much slower).
Thus for example, if $X + Y \rxn W$ and $W \rxn X + Z$ are two reactions, then there must also be a leak reaction $Y\rxn Z$.
Formally defining catalytic cycles and catalysts is non-trivial and is beyond the scope of this paper~\cite{gopalkrishnan2011catalysis}.)

\vspace{5pt}

\paragraph{Leaks due to Law of Reversibility.}
Another fundamental law of chemical kinetics is that any reaction occurs also in the reverse direction at some (possibly much slower) rate.
In other words, reaction $X + Y \rxn Z + W$ must be accompanied by $Z + W \rxn X + Y$.
(The degree of reaction reversibility is related to the free-energy use, such that irreversible reactions would require ``infinite'' free energy.)

\vspace{5pt}

\paragraph{Leaks due to Spurious Activation in Strand-Displacement Cascades.}
Arbitrary chemical reaction networks can in principle be implemented with DNA strand displacement cascades~\cite{soloveichik2010dna,cardelli2013two}.
Implementations based on strand displacement  also suffer from the problem of leaks~\cite{chen2013programmable}.
The implementation of a reaction like $X + Y \rxn Z + W$ consists of ``fuel'' complexes present in excess, that hold $Z$ and $W$ sequestered.
A cascaded reaction of the fuel complex with $X$ and $Y$ results in the release of active $Z$ and $W$.
Leaks in this case consist of $Z$ and $W$ becoming spuriously activated, even in the absence of active $X$ or $Y$.

Importantly, for a catalytic reaction such as $X + Y \rxn X + Z$, it is possible to design a strand displacement implementation that does not leak the catalyst $X$.
This implementation would release the same exact molecule of $X$ as was consumed to initiate the process, as in the catalytic system described in~\cite{zhang2007engineering}. Since fuels do not hold $X$ sequestered, $X$ cannot be produced in the leak process (although $Z$ can).

\vspace{5pt}

\paragraph{Modeling Reactions and Leaks.}
Note that in all cases above, we can guarantee that a species does not leak if it is exclusively a catalyst in every reaction it occurs in.
This allows us some handle on the leak.
In particular, we will ensure that the species we are trying to detect (called $D$ below) will be a catalyst in every reaction that involves it. 
Otherwise, there might be a leak pathway to generating $D$ itself---which is fundamentally irrecoverable.

We express the implementations below in the \emph{population protocol} formalism~\cite{angluin2006passivelymobile}.
That is, we consider a system with $n$ molecules (aka nodes), which interact uniformly at random in a series of discrete steps.
A population protocol is given as a set of reactions (aka transition rules) of the form
$$ A + B \rightarrow C + D.$$
Note that unlike general reaction networks, population protocols conserve total molecular count since molecules never combine or split.
For this reason, compared to general chemical reaction networks, this model is  easier to analyze.

Given the set of reactions defining a protocol, we partition the species into \emph{catalytic} states, which never change count as a consequence of any reaction, and \emph{non-catalytic}, otherwise.
Crucially, we model \emph{leaks} as spurious reactions which can consume and create arbitrary non-catalytic species.
More formally, a leak is a reaction of the type
$$ S \rightarrow S',$$

\noindent where $S$ and $S'$ denote arbitrary non-catalytic species.
In the following, we do not make any assumptions on the way in which these leak transitions are chosen (i.e., they could in theory be chosen \emph{adversarially}), but we assume an upper bound  on the \emph{rate} at which leaks may occur in the system, controlled by a parameter $\beta$. 

\paragraph{Leak-Robust Detection.}
A computationally simple task which already illustrates the difficulty of information processing in such an error-prone system is \emph{single molecule detection}.
Consider a solution of $n$ molecules, in which a single molecule $D$ may or may not be present.
Intuitively, the goal is to generate large-scale (in the order of $n$) change in the system, depending on whether or not $D$ is present or absent.
Our time complexity measure is \emph{parallel time}, defined as the number of pairwise interactions, divided by $n$. 
This measure of time naturally captures the parallelism of the system, where each molecule can participate in a constant number of interactions per unit time.
Subject to leaks, our goal is to design the chemical interaction rules (formalized as a population protocol) to satisfy the following behavior. 
If $D$ is present then it is detected fast, in logarithmic parallel time, 
and that the output is probabilistically ``stable'' in the sense that sampled at a random future time the system is in the ``detected configuration'' with high probability.
By contrast, if $D$ is absent, then the system sampled at a random future time should be in the ``undetected configuration'' with high probability.
This basic task has several variations, for instance signal amplification or approximate counting of $D$.

We first develop some intuition about this problem, by considering some strawman approaches.

A first trivial attempt would be to have neutral molecules become ``detectors'' (state $T$) as soon as they encounter $D$, that is,
$$ D + N \rxn D + T. $$
This approach suffers from two fatal flaws. First, it is \emph{slow}, in that detection takes \emph{linear} parallel time. Second, it has no way from recovering from leaks of the type $N \rxn T$.

A second attempt could try to implement an epidemic-style detection of $D$, that is:
$$ D + N \rxn D + T $$
$$ T + N \rxn T + T. $$

This approach is \emph{fast}, i.e. converges in \emph{logarithmic} parallel time in case $D$ is present. However, if $D$ is not present, the algorithm converges to a \emph{false positive} state: a leak of the type $N \rxn T$  brings the system to an all-$T$ state, despite the absence of $D$.
One could try to add a ``neutralization'' pathway by having $T$ turn back to $N$ after a constant number of interactions, but a careful analysis shows that
this approach also fails to recover from leaks of the type  $N \rxn T$.

Thus, it is not clear whether leak-resistant detection is possible in population protocols (or more generally chemical reaction networks).
There has been considerable work in the algorithmic community on diffusion based models, e.g.~\cite{DBLP:conf/focs/KarpSSV00}. 
However, such results do not seem to apply to this setting, since leak models have not been considered previously, and none of the known techniques are robust to leaks. In particular, it appears that techniques for \emph{deterministic} computation in population protocols do not carry over in the presence of leaks.
More generally, this seems to create an unfortunate gap between the algorithmic community, which designs and analyzes population protocols in leak-free models, and
more practically-minded research, which needs to address such implementation issues.

\paragraph{Contribution.}
In this paper, we take a step towards bridging this gap. We provide a general algorithmic model of leaks, and apply it to the detection problem.
Specifically, our immediate goal is to elucidate the question of whether efficient, leak-robust detection is possible.

We prove that the answer is yes. We present a new algorithm, called \emph{Robust-Detect}, which guarantees the following.
Assume that the rate at which leaks occur is upper bounded
 by $\beta / n  \ll 1 / n$, and that we return the output mapping (detect/non-detect) of a randomly chosen molecule after $O(\log n)$ parallel time.
Then the probability of a \emph{false negative} is at most $1 / e + o(1)$, and the probability of a \emph{false positive} is at most $\beta$.
(Note that as the total molecular count $n$ increases, the chance that a particular interaction involves $D$ decreases linearly with $n$.
Thus the leak rate must also decrease linearly with $n$, or else the leaks will dominate.
Alternatively, we can view some fixed leak rate as establishing an upper bound on the molecular count $n$, see below.
)

\paragraph{Algorithm Description.}  We now sketch the intuition behind the algorithm and its properties, leaving the formal treatment to Sections~\ref{sec:algo} and~\ref{sec:analysis}.
Fix a parameter $s \geq 1$, to be defined later. We define a set of ``detecting'' species $X_1, \ldots, X_s$, arranged in consecutive levels.
Whenever a molecule meets $D$, it moves to the highest ``alert'' level, $X_1$.
Since leaks might produce this species as well, we decay it gracefully across $s$ levels.
More precisely, whenever a molecule at level $X_i$ meets another molecule at level $X_j$, both molecules move to state
$X_{\min(i, j) + 1}$.
A molecule which would move beyond level $X_s$ following a reaction becomes neutral, i.e. moves to species $N$.
Nodes in state $X_i$ with $i < s$ turn $N$ into $X_{i + 1}$, whereas molecules in state $X_s$ also become neutral when interacting with $N$.

\paragraph{Analysis.} Intuitively, the algorithm's dynamics for the case where a single molecule is in state $D$ are as follows. The counts of molecules in state $X_i$ tend to increase \emph{exponentially} with the alert level $i$, up to levels $\approx \log n$, when the count becomes a constant fraction of $n$.
However, once level $\log n$ is reached, these counts decrease \emph{doubly exponentially}. Thus, it suffices to  set $s = \log n$ to obtain that a fraction of at least $(1 - 1 / e)$ molecules are in one of the alert states $X_i$ in case $D$ is present. It is not hard to prove that leaks cannot meaningfully affect the convergence behavior in this case.

The other interesting case is when $D$ is not present, but leaks may occur, leading to possible false positives.
Intuitively, we can model this case as one where states $X_1$ at the highest alert level simply are created at a lower rate $\beta / n \ll 1 / n$.
A careful analysis of this setting yields that the probability of a \emph{false positive} ($D$ detected, but not present) in this case is at most $\beta$, corresponding to the leak rate parameter.

Our analysis technique works by characterizing the stationary behavior of the Markov chain corresponding to the algorithm, and the convergence properties (mixing time) of this chain.
For technical reasons, the analysis uses a non-standard hybrid argument, combining ideas from discrete analysis of population protocols with classic Markov chain techniques.
The argument proves that the algorithm always stabilizes to the correct output in logarithmic parallel time.

The analysis further highlights a few interesting properties of the algorithm.
First, if the detectable species $D$ is present in a higher count $k > 1$, then the algorithm effectively skips the first $\log k$ levels, and thus requires $\log (n / k) + O( \log \log n )$ states.
Second, it is not necessary to know the exact value of $\log n$, as the counts of species past this threshold decrease doubly exponentially.

\paragraph{Alternative Formulations.} An alternative view of this protocol is as solving the following related \emph{amplification} problem:
we are given a signal of strength (rate) $\phi$, and the algorithm's behavior should reflect whether this strength is below or above some threshold.
The detection problem requires us to differentiate thresholds set at $\beta / n$ and $1 / n$, for constant $\beta \ll 1$, but our analysis applies to more general rates.

Above, we have assumed that the leak rate decreases linearly with $n$, to separate from the case where a single instance of $D$ is present.
However, it is also reasonable to consider that the leak rate is \emph{fixed}, say, upper bounded by a constant $\lambda$.
In this case, the analysis works as long as the number of molecules $n$ satisfies $\lambda \ll 1/ n$.

\paragraph{Self-stabilization.}
Our algorithm is self-stabilizing in the sense that if the count of $D$ changes due to some external reason, the output quickly adapts (within logarithmic parallel time). 
This is particularly interesting if the algorithm is used in the context of a ``control module'' for a cell detecting $D$ and the amount of $D$ changes over time.
Note that strawman solutions considered above cannot be ``untriggered'' once $D$ has been detected, and thus cannot adapt to a changing input.

\section{Related Work}
There is much work on attempting to decrease error in the underlying chemical substrate.
A famous example includes kinetic proofreading~\cite{hopfield1974kinetic}.
In the context of DNA strand displacement systems in particular, leak reduction has been a prevailing topic~\cite{thachuk2015leakless}.
Despite the importance of handling leaks, there are few examples of non-trivial \emph{algorithms}, where leaks are handled through computation embedded in chemistry.
One algorithm that appears to be able to handle errors is \emph{approximate majority}~\cite{angluin2008simple}, originally analyzed in a model where a fraction of the nodes are Byzantine, in that they can change their reported state in an adversarial way. 
Potentially due to its robustness properties, the approximate majority algorithm appears to be widely used in biological regulatory networks~\cite{cardelli2014morphisms},
and it was also one of the first chemical reaction network algorithms implemented with strand displacement cascades~\cite{chen2013programmable}.

Our algorithm can be viewed as a timed, self-stabilizing version of rumor spreading. For analysis of simple rumor-spreading, see~\cite{rumourspreading}. Other work include fault-tolerant rumor spreading~\cite{DBLP:journals/dc/DoerrDMM16}, push-pull models~\cite{DBLP:conf/focs/KarpSSV00} and self-stabilizing broadcasting~\cite{DBLP:conf/soda/BoczkowskiKN17}.  A rumor-spreading formulation of the molecule detection problem is also considered in recent work~\cite{DBLP:journals/corr/DudekK17}, which relies on a different source amplification mechanism based on oscillator dynamics. This protocol~\cite{DBLP:journals/corr/DudekK17} is self-stabilizing in a weaker (probabilistic) sense compared to the algorithms from this paper and does not provide leak robustness guarantees.

\section{Preliminaries}
\subsection{Population Protocols with Leaks}

\paragraph{Population Protocols.}
We start from a standard population protocol model, where $n$ molecules (nodes) interact uniformly at random in a series of discrete steps. In our formulation, in each step, a coin is flipped to decide whether the current interaction is a regular reaction or a leak reaction. In the former case, two molecules are picked uniformly at random, and interact according to the rules of the protocol. 
In the latter case, a leak reaction occurs (see below).

A population protocol is given as a set of reactions (transition rules) of the form
$$ A + B \rightarrow C + D,$$
(where some of $A, B, C, D$ might be the same).
We (arbitrarily) match the first reactant ($A$) with the first product ($C$), and the second reactant ($B$) with the second product ($D$), and think of $A$ as changing state to $C$, and $B$ as changing state to $D$.
If the two molecules picked to interact do not have a corresponding interaction rule, then they don't change state and we call this a null interaction.
Population protocols are a special case of the stochastic chemical reaction networks kinetic model (e.g.,~\cite{soloveichik2009robust}(A.4)).

\paragraph{Catalytic and Non-Catalytic Species.}
Given a set of reactions, we define the set of \emph{catalytic} species as the set of states which never change as a consequence of any reaction.
That is, for every reaction, the species is present in the same count both in the input and the output of the reaction. For example, in the reactions
\begin{align*}
A + C \rightarrow B + C \\
A + B \rightarrow A + D
\end{align*}
\noindent we call $C$ \emph{catalytic}. 
Note that $A$ acts as a catalyst in the second reaction, but its count is changed by the first reaction, thus it is not overall catalytic.
All species whose count is modified by some reaction are called non-catalytic.
	Note that it is possible that a species is never created, but disappears as a consequence of an interaction. For example, in the reaction
	$$ L + L \rightarrow A  + B,$$
	
	\noindent $L$ is such as species. We define such species as \emph{non-catalytic}, since their creation is possible by the law of reversibility, and thus they can leak.

\paragraph{An Algorithmic Model of  Leaks.} 
A leak is a reaction of the type
$$ S \rightarrow S'$$
where $S$ and $S'$ are arbitrary non-catalytic species produced by the algorithm.
Note that the input and output species of a leak may be the same (although in that case the reaction is trivial).
In  the following, we make no assumptions on the way in which the input and output of a leak reaction are chosen---we assume that they are chosen adversarially. Instead, we assume an absolute bound on the probability of a leak.

We assume that each reaction is either a \emph{leak reaction} or a \emph{normal reaction}, which follows the algorithm.
We formalize this as follows.
\begin{definition}
	Given an algorithm, defined by a set of reactions, the set of catalysts is the set of species whose count does not change as a consequence of any reaction.
	A \emph{leak} is a spurious reaction, which changes an arbitrary non-catalytic species to an arbitrary non-catalytic species.
	The \emph{leak rate} $\beta / n$ is the probability that any given interaction is a leak reaction.
\end{definition}

\subsection{The Detection Problem}

In the following, we consider the following \emph{detection} task: we are given a distinct species $D$, whose presence or absence must be detected by the algorithm, in the presence of leaks.
 More precisely, if the species $D$ is present, then the algorithm should stabilize to a state in which molecules map to output value ``detect''.
 Otherwise, if $D$ is not present, then the algorithm should stabilize to a state in which molecules map to output value ``non-detect''.
To observe the algorithm's output, we sample a molecule at random, and return its output mapping. (Alternatively, to boost accuracy, we can take a number of  samples, and  return the majority output mapping.)
We require that species $D$ are catalytic.

\section{The Robust-Detect Algorithm}
\label{sec:algo}

\paragraph{Description.}
As given in the problem statement, we assume that there exists a distinguished species $D$, which is to be detected, and which never changes state.
Our algorithm implements a chain of detection species $X_1, \ldots, X_s$, for some parameter $s$, each of which maps to output ``detect'', but with decreasing ``confidence''.
Further, we have a neutral species $N$, which maps to output ``non-detect''.
We assume that the parameter $s = \lceil \log n \rceil$, and that initially all molecules are in state $N$. We specify the transitions below, and provide the intuition behind them.

\vspace{-3mm}
\begin{algorithm}
\label{alg:detect1}
\begin{align*}
	D + X_i  & \rightarrow  D +  X_1, \quad\forall i \in \{  2, \ldots, s  \} \\
	D + N & \rightarrow D + X_1 \\
	X_s + X_s & \rightarrow   N + N \\
	X_s + N & \rightarrow  N + N \\
	X_i + X_j & \rightarrow X_{\min(i, j) + 1} + X_{\min(i, j) + 1}, \quad\forall i, j \in \{  1, 2, \ldots, s - 1  \} \\
	X_i + N & \rightarrow X_{i + 1} + X_{i + 1}, \quad\forall i \in \{ 1, 2, \ldots, s - 1\}\\
\end{align*}
\end{algorithm}
\vspace{-5mm}
The intuition behind the algorithm is as follows.
The ``detecting'' species $X_1, \ldots, X_s$ are arranged in consecutive levels.
Whenever a molecule meets $D$, it moves to the highest ``alert'' level, $X_1$.
Since leaks might produce this species as well, we decay it gracefully across $s$ levels.
After going through these levels, a molecule moves to  neutral state $N$, in case it is not brought back either by meeting $D$, or some molecule at a lower alert level.
For this, whenever two of these species $X_i$ and $X_j$ meet, they both move to level $\min( i, j ) + 1$.
This reaction has the double purpose of both decaying the alert level of the molecule at the lower level, and of bringing back the molecule with the higher alert level.
Further, whenever a molecule at level $X_i$ meets a neutral molecule $N$, it advances its level by $1$.
At the same time, neutral molecules are turned into detector molecules whenever meeting some molecule at an alert level smaller than $s$.

\paragraph{Intuitive Dynamics.} Roughly, the chain of alert levels have the property that, for the first $\sim \log n$ levels, the count roughly \emph{doubles} with level index.
At the same time, past this point, counts exhibit a steep (doubly exponential) drop, so that a small constant fraction of molecules are always neutral.
The presence of $D$ acts like a trigger, which maintains the chain in ``active'' state.
The analysis in the next section makes this intuition precise.  These dynamics are illustrated in Figure~\ref{fig:levels}.

%

\begin{figure}
\centering     
\includegraphics[width=\textwidth]{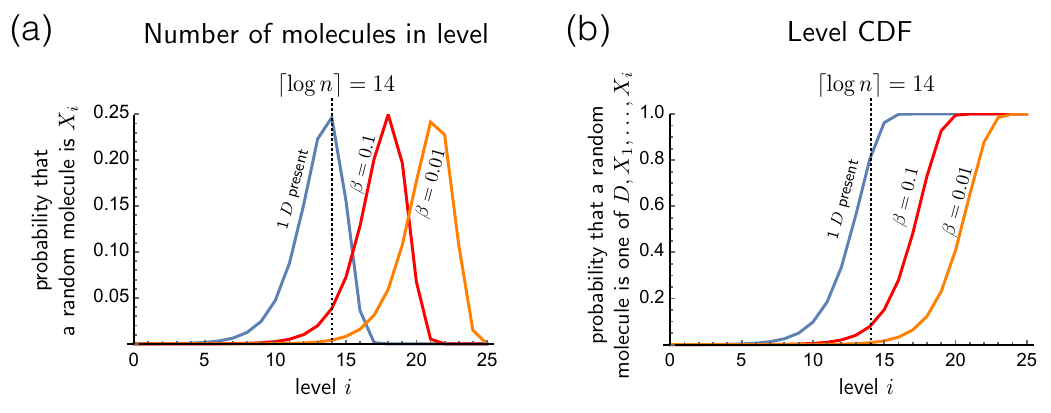}
\caption{Steady state probabilities of the Robust-Detect algorithm for $n = 10^4$ molecules. Three conditions are evaluated: (blue) 1 molecule of $D$ is present and no leak (leak parameter $\beta = 0$); (orange, red) no $D$ is present but with worst-case leak for false-positives (leak reactions $X_i \rxn X_1$ and $N \rxn X_1$) (orange: leak parameter $\beta = 0.01$, red: leak parameter $\beta = 0.1$). 
(a) The probabilities of each level $i$.
(b) The cumulative probabilities of levels $\leq i$, 
capturing the probability that a random molecule is in a ``detect'' state.
Note that it is enough to set the number of levels $s=14 = \lceil \log n \rceil$ to have both false positive and false negative error probabilities small, although for smaller leak rates ($\beta = 0.01$) increasing $s$ beyond $\log n$ can help better distinguish true and false positives.
Numerical probabilities are computed using equations~(\ref{eq:1}) and (\ref{eq:2}).
}
\label{fig:levels}
\end{figure}

\section{Analysis}
\label{sec:analysis}

\paragraph{Overview.} We divide the analysis of the detection algorithm into two parts. First, we derive stationary probabilities of the underlying Markov chain of transitions of particles, by solving recursively the equations following from the underlying dynamics. Later, we derive optimal bounds on the mixing time of this Markov chain---that is we show that probability distribution of states at every time $t \ge c n \log n$ (for some constant $c$) is almost the same as the stationary distribution.

\paragraph{Simplified Algorithm.} For the purpose of analysis, let us consider a following rephrasing of the detection algorithm:
molecule states are $D, X_1, X_2, X_3,\ldots$, and interactions are as follows:

\begin{algorithm}
\label{alg:detect2}
\begin{align*}
D + X_i & \rightarrow D + X_1,\\
X_i + X_j & \rightarrow X_{\min(i,j)+1} + X_{\min(i,j)+1}.\\
\end{align*}
\end{algorithm}
\vspace{-5mm}
This algorithm uses infinite number of states, thus it is useful only for purposes of theoretical analysis. However, it captures the behavior of the original algorithm in the following way:
if in Algorithm~\ref{alg:detect2} all states $X_{s+1},X_{s+2}, \ldots$ are collapsed to $N$, the transitions are equivalent to Algorithm~\ref{alg:detect1}. However, formulation of Algorithm~\ref{alg:detect2} is oblivious to parameter $s$, thus captures simultaneously the dynamics of all possible instances of Algorithm~\ref{alg:detect1}.

\subsection{Stationary Analysis}
Let us consider an initial state when $k \ge 0$ instances of state $D$ are present, with special attention given to $k=0$  and $k=1$. Those molecules do not change their state.

We can imagine tracking a particular molecule through its state transitions,
such that its state can be expressed as a Markov chain.
In the following, we will focus on analyzing the stationary distribution of this Markov chain.

For any $i \in \{1, 2, \ldots, s \}$, we let $p^\star_i$ be the stationary probability that a molecule chosen uniformly at random is in state $X_i$.
We let $p^\star_0 = \frac{k}{n}$ be the (stationary) probability that the molecule is in the state $D$. Let $p^\star_{\le i} = p^\star_0 + \ldots + p^\star_i$ be the probability that a molecule is in any of the states $D, X_1, \ldots, X_i$.

Let us now analyze these stationary probabilities.

\paragraph{Stable State with No Leaks.}
We first analyze the simplified case where no leaks occur.
A molecule $u$ is in one of states $D, X_1, \ldots, X_{i}$ at time $t$, in two cases:
\begin{itemize}
	\item It was in state $D, X_1, \ldots, X_{i}$ at time $t - 1$, and did not get selected for a reaction, which occurs with probability $1 - 2 / n$.
	\item It got selected for a reaction with element $u'$, and either $u$ or $u'$ was in one of states $D, X_1, \ldots, X_{i-1}$.
\end{itemize}

Hence, by stationarity, we get that
\[ p^\star_{\le i} = p^\star_{\le i} \left(1 - \frac{2}{n} \right) + \frac{2}{n} \cdot (1 - (1-p^\star_{\le (i-1)})^2).\]

From this we get that
\[ 1- p^\star_{\le i} = (1-p^\star_{\le (i-1)})^2,\]
which solves to
\begin{align}
p^\star_{\le i} = 1 - \left(1 - \frac{k}{n}\right)^{2^{i}}. \label{eq:1}
\end{align}

This gives us following estimates: if $k\ge 1$, then $p^\star_i \approx 2^{i-1}\frac{k}n$ for $i \le \log (n/k)$. Additionally, for $i = \log(n/k) + 1 + j$, $p^\star_i \approx e^{-2^{j}}$.
Thus, for $i \ge \log(n/k) + \Theta(\log \log n)$ in all practicalities $p^\star_i \approx 0$.

To analyze the probability of detection when $k=1$, we sum probabilities for all $i$ from $0$ to $s = \lceil \log n \rceil$

\[
	\Pr[ \text{detect} ] = \sum_{i = 0}^{s} p^\star_i = p^\star_{\le s} \ge  1 - \left(1 - \frac1n\right)^n  \ge 1 - \frac1e.
\]

\medskip
\paragraph{Probability of False Positives with Leaks.} A useful side effect of the previous analysis is that we also get probability bounds for detection in the case where $D$ is \emph{not} present, i.e. false positives.
We model this case as follows. Assume that there exists an upper bound $\lambda$ on the probability that a certain reaction is a leak.
Examining the structure of the algorithm, we note that the worst-case adversarial application of leaks would be if this probability is entirely concentrated into leaks which produce species $X_1$.

To preserve molecular count, we assume the following simplified leak model, which is equivalent to the general one, but easier to deal with in the confines of our algorithm.

Each reaction is a \emph{leak} with probability $\lambda = \beta / n$, where $\beta \ll 1$ is a small constant.
If a reaction is a leak, it selects a molecule at random, and transforms it into an arbitrary state. In this case, we will assume adversarially that all leaked molecules are transformed into state $X_1$.
Notice that the assumption that $\beta \ll 1$ is required to separate this setting from the case where $D$ is present in the system, where the probability of producing state $X_1$ is $2 / n$.

We continue with calculations of $p^\star_0,p^\star_1,\ldots$ for the above formulation.
Note that the recurrence relation for $p^\star_{\le i}, i \geq 1$ is changed as follows:
\begin{itemize}
\item If at that round there was no leak, the transition probabilities are as previously. This happens with probability $1 - \frac{\beta}{n}$.
\item If there was a leak, then the molecule either is selected as a leaked molecule (this happens with probability $\frac{1}{n} \cdot \frac{\beta}{n}$) or it was not selected as a leaked molecule, but it was already in the proper state (probability $\frac{\beta}{n} \cdot \frac{n-1}{n} p^\star_{\le i}$).
\end{itemize}
The recursive formulation gives
\[
	p^\star_{\le i} = \left(p^\star_{\le i} \left(1 - \frac2n\right) + \frac2n \left(1 - (1-p^\star_{\le(i-1)})^2\right) \right)\left(1-\frac\beta{n}\right) + \left(\frac1n + \frac{n-1}{n}p^\star_{\le i} \right)\frac{\beta}{n}.
\]
Which is equivalent to
\[1 - p^\star_{\le i} = \frac{\left(1-\frac\beta{n}\right)}{\left(1-\frac\beta{2n}\right)}(1-p^\star_{\le(i-1)})^2\]
leading to (using estimate $(1-\frac\beta{n})/(1-\frac\beta{2n}) \approx (1-\frac\beta{2n})$)
\begin{align}
p^\star_{\le i} \approx 1 - \left(1 - \frac{\beta}{2n}\right)^{1+2+\ldots+2^{i-1}} = 1 - \left(1 - \frac{\beta}{2n}\right)^{2^{i}-1} .   \label{eq:2}
\end{align}

This gives us following estimates: $p^\star_i \approx 2^{i-2}\frac{\beta}n$ for $i \le \log (2n/\beta)$. Additionally, for $i = \log(2n/\beta) + 1 + j$, $p^\star_i \approx e^{-2^{j}}$.
Thus, for $i \ge \log(2n/\beta) + \Theta(\log \log n)$ in all practicalities $p^\star_i \approx 0$.

This immediately implies that
\[
	\Pr[\text{detect}] = \sum_{i=0}^{s} p^\star_{i}  = p^\star_{\le s} \le 1 - \left(1 - \frac{\beta}{2n}\right)^{2n} = 1 - \frac{1}{e^{\beta}} \approx \beta,
\]
which means that the probability that a randomly chosen molecule is in detect state when chosen uniformly at random is at most $\beta$.

\paragraph{Probability of False Negatives with Leaks.}
Under the same leak model, it is easy to notice that the ``best'' adversarial strategy for our algorithm in case $D$ is present is to concentrate all leaks to create the neutral species $N$ (or $X_{\infty}$ in case of Algorithm~\ref{alg:detect2}).
It is easy to see that this just decreases the total probability of detect states by the leak probability $\lambda = \beta / n$. More formally, we compute once again stationary probabilities. The recurrent relation is

\[ p^\star_{\le i} = \left(p^\star_{\le i} \left(1 - \frac{2}{n} \right) + \frac{2}{n} \cdot (1 - (1-p^\star_{\le (i-1)})^2)\right)\left(1 - \frac\beta{n}\right) + \frac{\beta}{n} \cdot \frac{n-1}{n} p^\star_{\le i}.\]
Using estimate $(1-\frac\beta{n})/(1-\frac\beta{2n}) \approx (1-\frac\beta{2n})$ we reach
\[p^\star_{\le i} = \left(1 - \frac{\beta}{2n}\right) (1 - (1-p^\star_{\le (i-1)})^2).\]
Thus we have for the first $\log (n/k)$ levels the dampening factor of $(1 - \beta/(2n))$ per level (compared to the leakless case). It can be easily shown by induction that
\[ \left( 1 - \frac{\beta}{2n} \right)^i \left(1 - \frac{k}{n}\right)^{2^i} \le p^\star_{\le i} \le \left(1 - \frac{k}{n}\right)^{2^i}. \]

The estimates for $p^\star_{i}$ follow from the leakless case, after taking into the account the composed dampening factor:
\[
	\Pr[\text{detect}] = \sum_{i=0}^{s} p^\star_{i}  = p^\star_{\le s} \ge \left(1 - \frac{1}{e}\right)\cdot\left(1-\frac{\beta}{2n}\right)^{\log n}  =  1 - \frac1e - \bigo\left(\frac{ \log n}{ n}\beta\right).
\]

Finally, we summarize the results in this section as follows:
\begin{theorem}
	Assuming leak rate $\beta / n$ for $\beta \ll 1$, Robust-Detect guarantees the following.
	\begin{itemize}
		\item The probability of a false positive is at most $\beta$.
		\item The probability of a false negative is at most $1/e + \bigo(\beta \cdot (\log n)/n) $.
	\end{itemize}
\end{theorem}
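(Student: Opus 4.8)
The plan is to prove the two bullets separately, in each case by computing the stationary probability $p^\star_{\le s}$ that a randomly sampled molecule lies in a detect state $\{D, X_1, \ldots, X_s\}$, and then reading off the detect/non-detect probability at $i = s = \lceil \log n\rceil$. I would track a single tagged molecule as a Markov chain over $\{D, X_1, X_2, \ldots\}$ (working in the infinite-state Algorithm~\ref{alg:detect2} and collapsing all levels above $s$ to $N$), and exploit the structural fact that the only quantity governing the cumulative probabilities is the one-step balance for the downward-closed sets $\{D, X_1, \ldots, X_i\}$. Concretely, the rule $X_a + X_b \to X_{\min(a,b)+1} + X_{\min(a,b)+1}$ means a molecule can enter $\{X_1,\ldots,X_i\}$ only through a reaction in which some participant already sits at level $\le i-1$, which is exactly what produces the term $1-(1-p^\star_{\le(i-1)})^2$ in the balance equation.

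For the false-positive bound ($k=0$), the adversary wishes to \emph{maximize} the detect probability. Absent $D$, the only source feeding $X_1$ is a leak, so I would argue that concentrating the entire rate budget $\beta/n$ into leaks of the form (arbitrary) $\to X_1$ is extremal, since the neutral-to-detect pump is monotone in the supply of low-index species and a leak producing $X_j$ with $j>1$ can only contribute less. Under that worst case the balance equation becomes $1-p^\star_{\le i} = \frac{1-\beta/n}{1-\beta/(2n)}(1-p^\star_{\le(i-1)})^2$, which telescopes to $p^\star_{\le i} \approx 1-(1-\beta/(2n))^{2^i-1}$; evaluating at $i=s$ with $2^s \le 2n$ gives $\Pr[\text{detect}] \le 1 - e^{-\beta} \le \beta$.

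For the false-negative bound ($k\ge 1$) I would run the symmetric argument: the adversary now \emph{minimizes} the detect probability, and the extremal strategy is to route every leak into the neutral species $N$ (equivalently $X_\infty$ in Algorithm~\ref{alg:detect2}), which simply drains detect mass at rate $\beta/n$. The balance equation then picks up a per-level dampening factor, giving $p^\star_{\le i} = (1-\beta/(2n))(1-(1-p^\star_{\le(i-1)})^2)$, and a direct induction sandwiches $p^\star_{\le i}$ between $(1-\beta/(2n))^i(1-k/n)^{2^i}$ and the leakless value $(1-k/n)^{2^i}$. Summing at $i=s=\lceil\log n\rceil$ with $k=1$, the leakless term already yields $1-(1-1/n)^n \ge 1-1/e$, and the compound dampening over at most $\log n$ levels costs only a $(1-\beta/(2n))^{\log n} = 1 - \bigo(\beta\log n/n)$ factor, producing the claimed $1/e + \bigo(\beta\log n/n)$ false-negative bound.

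The step I expect to be the main obstacle is the adversarial worst-case justification, namely rigorously showing that among all leak reactions $S\to S'$ permitted by the model, concentrating leaks into $X_1$ is extremal for false positives and into $N$ is extremal for false negatives; this calls for a genuine monotonicity or coupling argument over the leak allocation rather than the informal ``note that'' currently used. This interacts with a second, more delicate point: the balance equations are mean-field identities in which the tagged molecule meets a partner drawn from the stationary law, so one must confirm that this self-consistent fixed point actually describes the stationary distribution of the true $n$-molecule chain and that the cumulative probabilities concentrate around it. Both the extremality and the mean-field self-consistency are where the real work lies; once they are in place, the three recurrences and their evaluation at $i=s$ are routine telescoping computations.
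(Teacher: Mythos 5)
Your proposal follows essentially the same route as the paper: tracking a tagged molecule as a Markov chain over the infinite-level Algorithm~\ref{alg:detect2}, writing the one-step balance equations for the cumulative probabilities $p^\star_{\le i}$, assuming worst-case leaks into $X_1$ for false positives and into $N$ for false negatives, and telescoping the recurrences to evaluate at $i = s = \lceil \log n\rceil$. The two points you flag as the ``real work''---rigorously justifying the extremality of the adversarial leak allocation and the mean-field self-consistency of the balance equations---are in fact treated just as informally in the paper itself (the worst case is asserted by inspection, and the stationary equations are used without a propagation-of-chaos argument), so your proposal matches the paper's proof in both substance and level of rigor.
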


Notice that these probabilities can be boosted by standard sampling techniques.

\subsection{Convergence Analysis}
We now proceed with an analysis of the convergence speed of the previously described protocols. To avoid separate analysis for each of the aforementioned cases (no leaks, false positives, false negatives) and to be independent from all possible initializations of the algorithm, we first start with showing that, under no leaks and with no $D$ present, all states $X_1, \dots, X_c$ are quickly killed.

In this section, it is more convenient to use $t$ to refer to the total number of interactions, rather than parallel time. 
To convert to parallel time, one needs to divide by $n$, the number of molecules.

\begin{lemma}
\label{lem:cleaning}
Assume arbitrary (adversarial) initial state in $t=0$ and evolution with no leaks ($\beta = 0$) and no $D$ is present. For any $c(n) \ge 1$, there is $t = \bigo(n \cdot (c(n) + \log n))$ such that with probability $1-1/n^{\Theta(1)}$ (with high probability) there is no molecule in any of the states $X_1, X_2, \ldots, X_{c(n)}$ after $t$ interactions.
\end{lemma}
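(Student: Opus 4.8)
The plan is to exhibit a single real-valued potential that contracts geometrically regardless of the (adversarial) configuration, so that the whole statement follows from a one-line supermartingale computation plus Markov's inequality. Since no $D$ is present and $\beta=0$, the only rule of Algorithm~\ref{alg:detect2} that ever fires is the level rule $X_i + X_j \rxn X_{\min(i,j)+1}+X_{\min(i,j)+1}$, and every pair of molecules reacts (there are no null interactions). Writing $\ell(u)$ for the current level of molecule $u$, I would define
\[ \Psi = \sum_u 3^{-\ell(u)}. \]
Using base $3$ (any base $>2$ works) instead of base $2$ is essential: a reaction replaces levels $i\le j$ by two copies of $i+1$, so a base-$2$ weight would only be \emph{non-increasing} in the ``drag-down'' case $i\ll j$, whereas a base-$3$ weight strictly decreases. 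Note that if any molecule sits at level $\le c$ then $\Psi\ge 3^{-c}$; hence $\{\exists\,u:\ell(u)\le c\}\subseteq\{\Psi\ge 3^{-c}\}$, so it suffices to drive $\Psi$ below $3^{-c}$ with high probability.

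The key step is the per-interaction drift. For a reacting pair at levels $i\le j$,
\[ \Delta\Psi = 2\cdot 3^{-(i+1)} - 3^{-i} - 3^{-j} = -\tfrac13\,3^{-i} - 3^{-j} \le -\tfrac13\,3^{-\min(i,j)}, \]
a bound holding uniformly (including $i=j$). Averaging over the uniformly random reacting pair and sorting the molecules so that $\ell_{(1)}\le\cdots\le\ell_{(n)}$, the minimum over a pair whose smaller sorted index is $k$ equals $\ell_{(k)}$, which gives
\[ \mathbb{E}[\Delta\Psi]\le -\frac{1}{3\binom{n}{2}}\sum_{k=1}^n (n-k)\,3^{-\ell_{(k)}}. \]
Here I would use that the weights $w_k:=3^{-\ell_{(k)}}$ are non-increasing: the terms with $k\le n/2$ carry $(n-k)\ge n/2$, and the top half of a non-increasing non-negative sequence sums to at least $\tfrac12\Psi$, so $\sum_k (n-k)w_k\ge \tfrac{n}{4}\Psi$. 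This yields the uniform multiplicative drift $\mathbb{E}[\Delta\Psi\mid\mathcal{F}_t]\le -\Psi/(6n)$, whence $\mathbb{E}[\Psi(t)]\le (1-\tfrac1{6n})^t\Psi(0)\le e^{-t/(6n)}\,\Psi(0)$, with $\Psi(0)\le n/3$ since every level is at least $1$.

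To finish, Markov's inequality gives $\Pr[\Psi(t)\ge 3^{-c}]\le 3^{c}\,\mathbb{E}[\Psi(t)]\le \tfrac{n}{3}\,3^{c}\,e^{-t/(6n)}$, and choosing $t=\bigo(n(c+\log n))$ drives this below $n^{-\Theta(1)}$, which by the containment above is exactly the claim. The main obstacle — and the reason I would avoid the natural ``the minimum occupied level rises by one every $\bigo(n)$ interactions'' argument — is precisely the drag-down reaction $X_i+X_j\rxn X_{i+1}+X_{i+1}$ with $i\ll j$: it can repeatedly pull high molecules back to the frontier, so the minimum level may stall while low levels are repopulated, and a per-level accounting loses a spurious extra factor. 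Isolating a potential that is genuinely contracting in \emph{every} configuration (which forces base $>2$) and establishing the uniform drift via the sorting/top-half estimate is the crux; once that is in place the remainder is immediate.
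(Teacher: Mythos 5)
Your proof is correct and takes essentially the same approach as the paper's: the identical potential $\Phi(X_i)=3^{-i}$, a multiplicative drift of the form $\left(1-\Theta(1/n)\right)$ per interaction, and a Markov-inequality finish with $t=\bigo(n(c+\log n))$. The only difference is that the paper gets the drift more directly: since the interacting pair's \emph{combined} potential drops by at least a third (no per-molecule claim is needed, so the ``drag-down'' case causes no trouble), linearity over molecules, each interacting with probability $2/n$, immediately yields $\mathbf{E}[\Phi_{t+1}\mid\Phi_t]\le\left(1-\frac{2}{3n}\right)\Phi_t$, making your sorting/top-half estimate unnecessary.
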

\begin{proof}
We assign a potential to each molecule, based on the state it is currently in:
$\Phi(X_i) = 3^{-i}$. We also define a global potential $\Phi_t$ as sum of all molecular potentials after $t$ interactions.
Observe, that when two molecules interact, following rule
$X_i + X_j \rightarrow X_{\min(i, j) + 1} + X_{\min(i, j) + 1}$, then there is:
\[\Phi( X_{\min(i, j) + 1}) + \Phi( X_{\min(i,j) + 1} ) \le 2/3 \cdot \left(\Phi(X_i) + \Phi(X_j)\right),\]
which can be interpreted that each interacting molecule loses at least $1/3$ of its potential.
Since each molecule participates in an interaction with probability $\frac{2}{n}$ in each round, the following bound holds:
\[ \mathbf{E}[\Phi_{t}-\Phi_{t+1} | \Phi_t] \ge \cdot \sum_{v} \Pr(v \text{ interacts in round } t)\cdot \frac{1}{3}\Phi_t(v) = \frac{2}{3n} \Phi_t,\]
\[\mathbf{E}[\Phi_{t+1} | \Phi_t] \le \left(1 - \frac{2}{3n}\right) \Phi_t.\]
Substituting $\Phi_0 \le n$ and fixing $t \ge \frac{3}{2}n \ln (n \cdot 3^{c(n)} \cdot n^{\Theta(1)}) = \bigo(n (c(n) + \log n + \Theta(\log n)))$ we have
\[\mathbf{E}[\Phi_t]\le  \left(1 - \frac{2}{3n}\right)^t \cdot n \le e^{-\ln (n \cdot 3^{c(n)} \cdot n^{\Theta(1)})} \cdot n = 3^{-c(n)} \cdot \frac{1}{n^{\Theta(1)}}.\]
By Markov's inequality, this means that there is no molecule in any of the states $X_1, X_2, \ldots, X_c$ with probability at least $1 - n^{-\Theta(1)}$, that is with high probability.
\qed
\end{proof}

We mention one additional useful property of Algorithm~\ref{alg:detect2}, that its actions on population are decomposable with respect to levels. That is, define $\textsc{level}_t(u) = i$ if molecule $u$ at time $t$ is in state $X_i$, and $\textsc{level}_t(u) = 0$ if it is in state $D$.
\begin{observation}
\label{obs:coupling}
Let $\{u_1,u_2,\ldots,u_n\},\{v_1,v_2,\ldots,v_n\},\{w_1,w_2,\ldots,w_n\}$ be 3 disjoint populations each on $n$ molecules, following evolution defined by Algorithm~\ref{alg:detect2}. Moreover, let their evolutions be coupled: at each time $t$, in each population the corresponding molecules interact (i.e., the interaction is $u_i + u_j$, $v_i + v_j$, $w_i + w_j$ in the three populations for some $i, j$).

If $\forall_i \textsc{level}_0(u_i) = \min(\textsc{level}_0(v_i),\textsc{level}_0(w_i)),$
then at any time $t>0:$
$\forall_i \textsc{level}_t(u_i) = \min(\textsc{level}_t(v_i),\textsc{level}_t(w_i)).$
\end{observation}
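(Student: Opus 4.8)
The plan is to prove the statement by induction on the number of interactions $t$, carrying the invariant ``$\textsc{level}_t(u_i) = \min(\textsc{level}_t(v_i), \textsc{level}_t(w_i))$ for all $i$'' from time $t$ to time $t+1$. The base case $t = 0$ is exactly the hypothesis. For the inductive step, I would first observe that because the three evolutions are coupled to use the \emph{same} ordered pair of indices, say $(i, j)$, at time $t$, only the molecules at positions $i$ and $j$ can change state in any of the populations; every other position retains its level, and hence its invariant, for free. So it suffices to verify the invariant at the two interacting positions.

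The second step is to record the level-update as an explicit function of the two incoming levels. Since $D$ (level $0$) is a catalyst in Algorithm~\ref{alg:detect2}, a molecule at level $0$ stays at level $0$, while a molecule at level $p \ge 1$ always moves to level $\min(p, q) + 1$, where $q$ is its partner's level. This single formula simultaneously captures $X_i + X_j$ (both partners at level $\ge 1$) and $D + X_i$ (a partner at level $0$ yields new level $\min(p,0)+1 = 1$). Writing $f(p, q)$ for the updated level of the first molecule, we thus have $f(p,q) = 0$ when $p = 0$ and $f(p,q) = \min(p,q) + 1$ when $p \ge 1$, and symmetrically $g(p,q)$ for the second molecule, with the roles of $p$ and $q$ exchanged.

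The third step is the algebraic core. Writing $a = \textsc{level}_t(u_i)$, $a' = \textsc{level}_t(v_i)$, $a'' = \textsc{level}_t(w_i)$ and $b, b', b''$ for position $j$, the induction hypothesis gives $a = \min(a', a'')$ and $b = \min(b', b'')$, and I must show $f(a,b) = \min\bigl(f(a',b'), f(a'',b'')\bigr)$ together with its symmetric counterpart for $g$ at position $j$. I would split on whether $a = 0$. If $a = 0$, then $a' = 0$ or $a'' = 0$, so at least one of $f(a',b'), f(a'',b'')$ equals $0$, making the right-hand side $0 = f(a,b)$. If $a \ge 1$, then $a' \ge 1$ and $a'' \ge 1$, so all three updates take the $\min{+}1$ form, and the required identity reduces to $\min(a,b) = \min\bigl(\min(a',b'), \min(a'',b'')\bigr)$, which is immediate from $a = \min(a',a'')$ and $b = \min(b',b'')$ by commutativity and associativity of $\min$. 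The statement for $g$ follows by the same computation after exchanging the roles of $i$ and $j$.

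The only delicate point, and the part I would flag as the main obstacle, is the level-$0$ (species $D$) case: there the update is a fixed point rather than the uniform $\min{+}1$ rule, so one cannot simply ``commute $\min$ through the update'' uniformly. The case split on $a = 0$ versus $a \ge 1$ is precisely what isolates this, and once past it the argument is pure $\min$-algebra. I expect no genuine difficulty beyond this bookkeeping.
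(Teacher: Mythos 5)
Your proof is correct. The paper states this observation without proof (it treats the claim as immediate), so your argument supplies exactly the details the authors leave implicit: induction on $t$, reduction to the two interacting positions via the shared index pair $(i,j)$, and the per-molecule update function $f(p,q)$ ($f(p,q)=0$ for $p=0$, $f(p,q)=\min(p,q)+1$ for $p\ge 1$) which commutes with pointwise minima. The case split on $p=0$ versus $p\ge 1$ correctly isolates the only non-uniform point (the catalyst $D$ is a fixed point of the dynamics, and a minimum involving $0$ is $0$), and the remaining identity $\min\bigl(\min(a',b'),\min(a'',b'')\bigr)=\min\bigl(\min(a',a''),\min(b',b'')\bigr)$ is indeed pure $\min$-algebra.
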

This observation can be naturally generalized to more than 3 populations.
As shown below, the observation implies that to analyze detection under noisy start, we can decouple starting noise from detected particle and analyze evolution under those two separately.
Denote by $p_i(t)$ and $p_{\le i}(t)$ the probability for a randomly picked molecule after $t$ interactions to be in the state $X_i$ or $D,X_1,\ldots,X_i$ respectively.

\begin{theorem}
Fix arbitrary leak model (i.e. no leaks, false-positives, false-negatives)
and arbitrary concentration of $D$. For any $c \ge 1$, and $t = \Omega(n \cdot (c + \log n))$, there is
\[ \left| p^\star_{\le c} - p_{\le c}(t) \right| \le 1/n^{\Theta(1)}, \]
where $p^\star$ is the stationary probability distribution of the identical process.
\end{theorem}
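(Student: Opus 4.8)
The plan is to establish this mixing bound through a coupling argument that reduces convergence to the ``forgetting'' of the initial configuration, which is exactly what Lemma~\ref{lem:cleaning} controls, while the min-decomposition of Observation~\ref{obs:coupling} isolates the part of the dynamics that still remembers the start. Concretely, I would run two copies of the chain under a common source of randomness: copy~$1$ from the given (arbitrary) initial configuration and copy~$2$ from the stationary distribution $\pi^\star$, so that at every step both copies use the same interacting pair of indices, the same decision of leak-versus-normal step, and the same leaked index and target state. Since copy~$2$ is started stationary, its time-$t$ level-$\le c$ statistic equals $p^\star_{\le c}$, while copy~$1$'s equals $p_{\le c}(t)$. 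By the coupling inequality it then suffices to show that with probability $1 - n^{-\Theta(1)}$ the two copies agree on $\mathbf{1}[\textsc{level}_i(t) \le c]$ for every molecule $i$, because
\[
\left| p_{\le c}(t) - p^\star_{\le c} \right| \le \Pr\!\left[\exists i:\ \mathbf{1}[\textsc{level}^{(1)}_i(t)\le c] \ne \mathbf{1}[\textsc{level}^{(2)}_i(t)\le c]\right].
\]

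Next I would invoke Observation~\ref{obs:coupling} to write each copy as the coupled minimum of a common ``signal'' run $S$ and a copy-specific ``noise'' run $R^{(\cdot)}$. The run $S$ starts from the source-only configuration ($D$ in its positions, everything else $N$) and carries the ongoing $D$-reactions, whereas the noise runs $R^{(1)}, R^{(2)}$ start from the non-source levels of the arbitrary configuration and of $\pi^\star$ respectively and evolve with no $D$ present. Leaks are routed so as to preserve the relation $\textsc{level}_i = \min(S_i, R_i)$ at every step: a worst-case false-positive leak (to $X_1$, i.e.\ level $1$) is applied to $S$ only, which is consistent since every noise level is $\ge 1$ and hence $\min(1, R_i) = 1$; a worst-case false-negative leak (to $N$, i.e.\ level $\infty$) is applied to both $S$ and $R^{(\cdot)}$, forcing both to $\infty$. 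Because the source positions and all leak events are coupled identically across the two copies, the signal run is pathwise identical, $S^{(1)} = S^{(2)} = S$, and the only difference between the copies lives in $R^{(1)}$ versus $R^{(2)}$.

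The decomposition then localizes any disagreement: since both copies share $S$, the indicators $\mathbf{1}[\min(S_i, R^{(1)}_i) \le c]$ and $\mathbf{1}[\min(S_i, R^{(2)}_i) \le c]$ can differ only when $S_i > c$ and at least one of $R^{(1)}_i, R^{(2)}_i$ is $\le c$. Hence a disagreement at some $i$ forces a molecule at level in $\{1, \dots, c\}$ to survive in $R^{(1)}$ or $R^{(2)}$. But each noise run has no $D$ and receives no level-$1$ injection, so Lemma~\ref{lem:cleaning} applies: after $t = \Omega(n(c + \log n))$ interactions, with probability $1 - n^{-\Theta(1)}$ no molecule remains at any of the levels $1, \dots, c$ in a given noise run. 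A union bound over the two noise runs yields that all indicators agree with probability $1 - n^{-\Theta(1)}$, which through the coupling inequality gives $|p_{\le c}(t) - p^\star_{\le c}| \le n^{-\Theta(1)}$.

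I expect the main obstacle to be the careful bookkeeping of leaks inside the min-decomposition, and in particular verifying that Lemma~\ref{lem:cleaning} still applies to the noise runs once they absorb the false-negative leaks to $N$. This needs a small extension of the potential argument: a leak to $N$ only zeroes a molecule's potential $\Phi(X_i) = 3^{-i}$, so it can only decrease $\Phi_t$, and the geometric decay $\mathbf{E}[\Phi_{t+1} \mid \Phi_t] \le (1 - \tfrac{2}{3n}(1 - o(1)))\,\Phi_t$ survives (using $\beta \ll 1$), preserving the $\bigo(n(c+\log n))$ cleaning time. The other delicate point is confirming that routing level-$1$ leaks into the signal run is both legitimate (it never disturbs the minimum, since $1$ is the smallest $X$-level) and necessary (routing them into a noise run would reintroduce low levels and defeat Lemma~\ref{lem:cleaning}); this asymmetry is precisely why the $X_1$-leak is the worst case for false positives.
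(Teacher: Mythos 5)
Your argument is sound for the leak models named in the statement, but it is a genuinely different route from the paper's. The paper never introduces a second, stationary-started chain: it decomposes a \emph{single} run, via Observation~\ref{obs:coupling}, into the initial configuration (with every $D$ replaced by $N$) coupled with a fresh population spawned at \emph{every} step where $D$ creates an $X_1$ and at every leak (a leak to state $S$ sets that molecule to $N$ in all existing populations and spawns a new population with the molecule in $S$ and all others in $N$); Lemma~\ref{lem:cleaning} kills each such population within $\tau = \bigo(n(c+\log n))$ steps, so the time-$t$ configuration depends only on spawns in the window $[t-\tau,t]$, whose law is time-invariant; hence $p_{\le c}(t_1)\approx p_{\le c}(t_2)$ for all $t_1,t_2\ge\tau$, and the theorem follows from the Ces\`aro characterization $p^\star_{\le c}=\lim_{t\to\infty}\frac{1}{t}\sum_{i=1}^{t} p_{\le c}(i)$. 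You instead run a classical coupling-to-stationarity argument: one copy from the arbitrary start, one from $\pi^\star$, a single shared signal run that permanently carries $D$ and absorbs the worst-case leaks, and exactly two copy-specific noise runs. Each approach buys something. Yours sidesteps the weakest point of the paper's proof: the paper implicitly needs \emph{all} populations spawned before $t-\tau$ to be dead at time $t$, a union bound over unboundedly many populations as $t\to\infty$ (it works, but only because survival probabilities decay geometrically, which the paper does not spell out), whereas you union bound over just two noise runs and hit the stationary expectation directly rather than through the Ces\`aro limit. Conversely, the paper's spawning trick handles a leak to an \emph{arbitrary} state $X_j$, while your routing rule (level-$1$ leaks into $S$, leaks to $N$ into both) breaks for intermediate targets $X_j$ with $2\le j\le c$: routing them into $S$ violates the min-invariant whenever the noise level at that molecule is below $j$, and routing them into the noise runs defeats Lemma~\ref{lem:cleaning}. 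This is harmless for the three models the theorem names, and is fixable inside your framework by the paper's own move (set $S_v \to X_j$ and $R_v \to N$). Two small points you should still pin down: collapse all states beyond level $c$ into $N$ (as the paper does first), so that the configuration chain is finite, $\pi^\star$ exists, and $p^\star_{\le c}$ is identified with the $\pi^\star$-expectation of the level-$\le c$ fraction; and fix the $D$ positions to agree in both copies, which is legitimate since $D$'s never change state and molecules are exchangeable.
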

\begin{proof}
First, for simplicity we collapse all states $X_{c+1}, X_{c+2}, \ldots$ into $N$, since it has no effect on $p_{\le c}$ distributions.
Consider a population of size $n$, under no leaks, no $D$, evolution. By Lemma~\ref{lem:cleaning}, in $\tau = \bigo(n \cdot (c + \log n))$ steps it reaches all-$N$ state, regardless of initial configuration, with high probability.
Thus evolution of any population $\{u_i\}$, under no leaks, with $D$ present, is a coupling (as in Observation~\ref{obs:coupling}) of following evolutions:
\begin{itemize}
\item initial configuration of population $\{u_i\}$, with each $D$ replaced with $N$;
\item for every timestep $t_i$ such that $D$ interacted with $X_i$ or $N$ creating $X_1$, we couple a population with corresponding molecule set to $X_1$ and every other molecule set to $N$, shifted in time so its evolution starts at time $t_i$.
\end{itemize}
Observe, that evolution of population of both types will reach all-$N$ state in $\tau$ steps, with high probability. Thus, conditioned on this high probability, the configuration at any $t \ge \tau$ is the result of coupling of all-$N$ (result of evolution of first type population) with possibly several configurations of the second type, where at each timestep $t' \in [t-\tau,t]$ such population was created independently with some probability only depending on $n$ and $k$. 
However, the coupling we just described is invariant from the choice of $t$, as long as $t \ge \tau$. Thus, for any $t_1,t_2 \ge \tau$, there is $\left| p_{\le c}(t_1) - p_{\le c}(t_2) \right| \le 1/n^{\Theta(1)}$. Since $p^\star_{\le c} = \lim_{t \to \infty} \frac{1}{t}\sum_{i=1}^t p_{\le c}(i)$, the claimed bound follows.

To take into account errors, we say that whenever there is a leak changing state of molecule $v$ to some state $S$ at time $t$, we change state of $v$ at that time in all existing populations to $N$, and create new population where $v$ has state $S$, and all other molecules are in $N$ state. The same reasoning as in the error-less case follows, since switching molecules to $N$ state it only speeds up convergence of populations to all-$N$ state and since  populations created due to leaks are created at each step with the same probability depending only on $n$ and error model.
\qed
\end{proof}

\section{Simulation Results}
\label{sec:simulation}

We simulated the Robust-Detect algorithm (Algorithm~\ref{alg:detect1}) using a modified version of the CRNSimulatorSSA Mathematica package~\cite{crnsimulator}.
Figure~\ref{fig:sims} shows the shape of typical trajectories when there is one molecule in state $D$ ($k = 1$), compared with no molecules in state $D$ ($k=0$) but with the worst-case leak for false-positives.
Note that $D$ is quickly detected if present, 
and if absent the system exhibits random perturbations that are quickly extinguished and are clearly distinguishable from the true positive case.

\begin{figure}[t]
\begin{center}
\includegraphics[width=\textwidth]{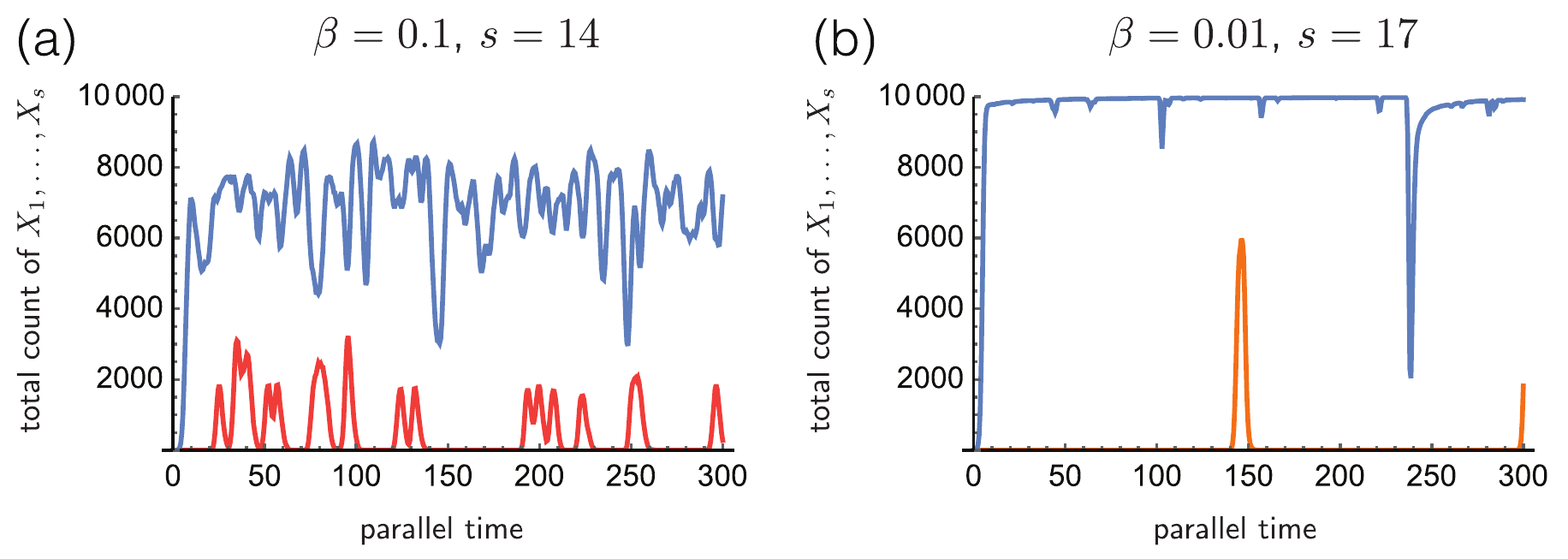}
\caption{
Typical time-evolution of the Robust-Detect algorithm (Algorithm~\ref{alg:detect1}).
Three colors correspond to the three conditions described in Fig.~\ref{fig:levels}:
(blue) 1 molecule of $D$ is present and no leak (leak parameter $\beta = 0$); (orange, red) no $D$ is present but with worst-case false-positive leak $X_i \rxn X_1$ and $N \rxn X_1$ (orange: leak parameter $\beta = 0.01$, red: leak parameter $\beta = 0.1$).
All $X_i$ states map to output value ``detect'', and thus we plot the sum of all their counts.
(a) $s = 14$ layers, $\beta = 0.1$ (red).
(b) $s = 17$ layers, $\beta = 0.01$ (orange). 
See Fig.~\ref{fig:levels} for the corresponding steady state probabilities.
Note that with smaller leak ($\beta = 0.01$), it is possible to better distinguish true positives and false positives by increasing the number of layers (from $14$ to $17$). 
In all cases there are $n = 10^4$ molecules, and the initial configuration is all molecules in neutral state $N$.
Parallel time (number of interactions divided by $n$) corresponds to the natural model of time where each molecule can interact with a constant number of other molecules per unit time.}
\label{fig:sims}
\end{center}
\end{figure}


\section{Conclusions}
We have considered the problem of modeling and withstanding leaks in chemical reaction networks, expressed as population protocols.
We have presented an arguably simple algorithm which is probabilistically correct under assumptions on the leak rate, and converges quickly to the correct answer.

Beyond the specific example of robust detection, we hope that our results motivate more systematic modeling of leaks, and further work on algorithmic techniques to withstand them. As such errors appear to spring from the basic laws of chemistry, their explicit treatment appears to be necessary.
The authors found it surprising that many of the algorithmic techniques developed in the context of deterministically correct population protocols might not carry over to implementations, due to their inherent non-robustness to leaks.

In future work, we plan to perform an exhaustive examination of which of the current algorithmic techniques could be rendered leak-robust, and whether known algorithms can be modified to withstand leaks via new techniques. Another interesting avenue for future work is \emph{lower bounds} on the set of computability or complexity of fundamental predicates in the leak model. Finally, we would like to examine whether our robust detection algorithm can be implemented in strand displacement systems.

\vspace{5pt}
\paragraph{Acknowledgments.}
We thank Lucas Boczkowski and Luca Cardelli for helpful comments on the manuscript.

\bibliography{MPP,refs}
\bibliographystyle{IEEEtran}

\end{document}